\newcommand{\comment}[1]{}
\newtheorem{theorem}{Theorem}{\bfseries}{\itshape}
\newtheorem{prop}{Proposition}{\bfseries}{\itshape}
\newtheorem{obs}{Observation}{\bfseries}{\itshape}
\newtheorem{assumption}{Assumption}{\bfseries}{\rm}
\theoremstyle{definition}
\theoremstyle{plain}
\definecolor{linkblue}{named}{Blue}
\newlength\problemsep
\title{Tight Routing and Spanning Ratios of Arbitrary Triangle Delaunay Graphs}
\author{Prosenjit Bose, Jean-Lou De Carufel, John Stuart}
\begin{document}

\date{\today}

\maketitle
\begin{abstract}
 A Delaunay graph built on a planar point set has an edge between two vertices when there exists a disk with the two vertices on its boundary and no vertices in its interior. When the disk is replaced with an equilateral triangle, the resulting graph is known as a \emph{Triangle-Distance Delaunay Graph} or TD-Delaunay for short. A generalized $\text{TD}_{\theta_1,\theta_2}$-Delaunay graph is a TD-Delaunay graph whose empty region is a scaled translate of a triangle with angles of $\theta_1,\theta_2,\theta_3:=\pi-\theta_1-\theta_2$  with $\theta_1\leq\theta_2\leq\theta_3$. We prove that $\frac{1}{\sin(\theta_1/2)}$ is a lower bound on the spanning ratio of these graphs which matches the best known upper bound (Lubiw \& Mondal \ \emph{J. Graph Algorithms Appl.}, \textbf{23}(2):345--369). Then we provide an online local routing algorithm for $\text{TD}_{\theta_1,\theta_2}$-Delaunay graphs with a routing ratio that is optimal in the worst case. When $\theta_1=\theta_2=\frac{\pi}{3}$, our expressions for the spanning ratio and routing ratio evaluate to $2$ and $\frac{\sqrt{5}}{3}$, matching the known tight bounds for TD-Delaunay graphs.
\end{abstract}

\section{Introduction} 
Geometric graphs are graphs whose vertex sets are points in the plane and whose edge weights are the corresponding Euclidean distances. A common theme in Computational Geometry is the study of shortest paths. In geometric graphs, one measure of how well a graph preserves distances is its spanning ratio. The spanning ratio of a geometric graph is the smallest upper bound on the ratio of distance in the graph to distance in the plane for all pairs of points \cite{narasimhan_smid_2007}. One particular geometric graph of interest is the Delaunay triangulation, which has an edge between two vertices exactly when they lie on the boundary of a disk which contains no other vertex in its interior \cite{DBLP:books/lib/BergCKO08}. 

A long-standing open problem is to determine the worst-case spanning ratio of the Delaunay triangulation, which is known to be between $1.5932$ \cite{DBLP:conf/cccg/XiaZ11} and $1.998$ \cite{DBLP:journals/siamcomp/Xia13}. In other words, there exists a point set where the spanning ratio is at least $1.5932$, and for any point set, the spanning ratio is at most $1.998$. While the exact spanning ratio of the standard Delaunay triangulation remains unknown, several variants do have known tight spanning ratios in the worst case. For example, when the empty disk is replaced with a square we obtain the $L_\infty$ or $L_1$-Delaunay graph, which is known to have a spanning ratio of exactly $\sqrt{4+2\sqrt{2}}\approx 2.61$ \cite{DBLP:journals/comgeo/BonichonGHP15}. Similar proof techniques have been generalized to Delaunay graphs based on rectangles and parallelograms \cite{DBLP:conf/esa/RenssenSSW23,mythesis}. In general, one can define a Delaunay graph from any convex distance function, and such a graph is known to have a constant spanning ratio where the spanning ratio depends on the ratio of the perimeter to the width of the convex shape \cite{DBLP:journals/jocg/BoseCCS10}. When the unit circle in this distance is a regular hexagon, then the exact worst-case spanning ratio is $2$ \cite{DBLP:journals/jocg/Perkovic0T21}. When the unit circle is an equilateral triangle, then exact worst-case spanning ratio is also $2$ \cite{DBLP:journals/jcss/Chew89}. A generalized $\text{TD}_{\theta_1,\theta_2}$-Delaunay graph is a TD-Delaunay graph whose empty region is a scaled translate of a triangle with angles of $\theta_1,\theta_2,\theta_3:=\pi-\theta_1-\theta_2$ with $\theta_1\leq\theta_2\leq\theta_3$. In this paper, we provide a lower bound of $\frac{1}{\sin(\theta_1/2)}$ that matches the best known upper bound for $\text{TD}_{\theta_1,\theta_2}$-Delaunay graphs \cite{DBLP:journals/jgaa/LubiwM19}.

The routing ratio of a geometric graph essentially captures how feasible it is to find short paths in a graph when making local decisions based only on the neighbourhood of the current vertex. The routing ratio is the smallest upper bound on the ratio of the length of the path returned by the routing algorithm and the Euclidean distance between all pairs of vertices. Routing in Delaunay trianglulations is notoriously difficult, with the routing ratio of the standard Delaunay triangulation known to be between $1.70$ \cite{DBLP:journals/dcg/BonichonBCPR17} and $3.56$ \cite{DBLP:journals/dcg/BonichonBCPR17}. Variations such as the $L_1$-Delaunay triangulation are known to have a routing ratio between $2.7$ \cite{DBLP:journals/dcg/BonichonBCPR17} and $3.16$ \cite{DBLP:conf/compgeom/Chew86}. 

For TD-Delaunay graphs, there is a gap between the spanning ratio of $2$ and the routing ratio which was shown to be exactly $\frac{5}{\sqrt{3}}$ in the worst-case \cite{DBLP:journals/siamcomp/BoseFRV15}. We show that this gap is preserved for $\text{TD}_{\theta_1,\theta_2}$-Delaunay graphs by extending techniques from \cite{DBLP:journals/siamcomp/BoseFRV15} to obtain a tight routing ratio of 
\begin{align*}
    C(\theta_1,\theta_2):=\max_{\substack{j\in\{1,2,3\} \\0\leq \alpha\leq\theta_j}} \Bigg(\frac{\sin(\theta_{j}-\alpha)}{\sin(\theta_{j+1})}+\frac{\sin(\alpha)}{\sin(\theta_{j-1})}+\min\Big(&
    \frac{\sin(\alpha)}{\sin(\theta_{j-1})}+
    \frac{\sin(\alpha+\theta_{j-1})}{\sin(\theta_{j+1})},\\
    &\frac{\sin(\theta_{j}-\alpha)}{\sin(\theta_{j+1})}+\frac{\sin(\alpha+\theta_{j-1})}{\sin(\theta_{j-1})}\Big)\Bigg).
\end{align*}

\section{Preliminaries}

We will denote the line segment from point $u$ to point $v$ as $uv$, and the length of $uv$ is denoted $|uv|$. For two vertices $u,v$ in a geometric graph $G$, the length of the shortest path from $u$ to $v$ in $G$ is denoted $d_G(u,v)$. Then for a constant $c\geq 1$, $G$ is said to be a $c$-spanner if for all vertices $u,v$ in $G$, we have $d_G(u,v)\leq c|uv|$. The spanning ratio of $G$ is the least $c$ for which $G$ is a $c$-spanner. The spanning ratio of a class of graphs $\mathcal{G}$ is the least $c$ for which all graphs in $\mathcal{G}$ are $c$-spanners. A {\em constant} spanner is a $c$-spanner where $c$ is a constant.

In a geometric graph, each vertex is identified with its coordinates. Here, one unit of memory is either a point in $\mathbb{R}^2$, or $\log_2(n)$ bits. The $k$-neighbourhood of a vertex $u$ in a graph is defined to be all the vertices $v$ such that there is a path from $u$ to $v$ consisting of $k$ or fewer edges. Formally, a $k$-local, $m$-memory routing algorithm is a function that takes as input $(s,N_k(s),t,M)$, and outputs a vertex $p$ where $s$ is the current vertex, $N_k(s)$ is the $k$-neighbourhood of $s$, $t$ is the destination, $M$ is an $m$-unit memory register, and $p\in N_1(s)$. An algorithm is said to be $c$-competitive for a family of geometric graphs $\mathcal{G}$ if the path output by the algorithm for any pair of vertices $s,t\in V(G)$ for $G\in\mathcal{G}$ has length at most $c|st|$. The routing ratio of an algorithm is the least $c$ for which the algorithm is $c$-competitive for $\mathcal{G}$.

Throughout this paper, we fix a triangle $\triangle$ in the plane with angles $\theta_1\leq \theta_2\leq \theta_3$. We assume that the corresponding corners of $\triangle$ are labelled $\tau_1, \tau_2, \tau_3$. In order to keep notation cleaner, we use arithmetic modulo $3$ for operations on index $i$ when referring to corners of triangles. For example, $\tau_{4}=\tau_1$, and $\tau_0=\tau_3$. By convention, the expression $\angle abc$ will refer to the smaller angle among the clockwise and counterclockwise angles between $ab$ and $bc$ for three non-collinear points $a,b,c\in\mathbb{R}^2$.

For any two points $u,v$ in the plane, define the triangle $T^{u,v}$ to be the smallest scaled translate of $\triangle$ with $u$ and $v$ on its boundary. Note that \textit{smallest} implies that at least one of $u,v$ is on a corner of $T^{u,v}$. We use $\tau^{u,v}_i$ to refer to the corner of triangle $T^{u,v}$ corresponding to $\tau_i$. Now we define the cones, depicted in Figure \ref{fig:cones}. In particular, for a point $p$ and index $i\in\{1,2,3\}$, let $C_{p,i}:=\{v\in\mathbb{R}^2 | p=\tau^{p,v}_i\}$ be the positive cone centred at point $p$ corresponding to $\tau_i$. On the other hand, define the negative cone $\overline{C_{p,i}}:=\{v\in\mathbb{R}^2 | v=\tau^{p,v}_i\}$. Note that $\overline{C_{p,i}}$ is $C_{p,i}$ rotated by $\pi$ radians about $p$.

\begin{figure}[ht!]
    \centering
    \includegraphics[page=2, scale=0.7]{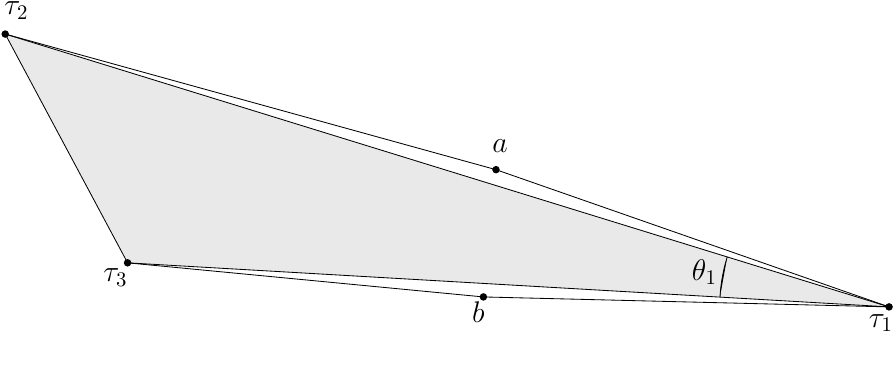}
    \caption{$C_{p,1},C_{p,2},C_{p,3}$ are the positive cones of $p$ and $\overline{C_{p,1}},\overline{C_{p,2}},\overline{C_{p,3}}$ are the negative cones of $p$.\label{fig:cones}}
\end{figure}

The TD-Delaunay graph of a vertex set $S\subseteq \mathbb{R}^2$ has an edge between vertices $u$ and $v$ when there exists an equilateral triangle with $u,v$ on its boundary and no other points of $S$ in its interior. Note that the equilateral triangle is a scaled translate of the TD unit circle. As with any Delaunay graph based on a convex distance function, every bounded face is a triangle \cite{DBLP:journals/jocg/BoseCCS10}. 
To define the $\text{TD}_{\theta_1,\theta_2}$-Delaunay graph, we replace the equilateral triangle with $\triangle$ containing angles of $\theta_1,\theta_2,\theta_3$. Equivalently, if $F$ is the affine transformation that brings $\triangle$ to the equilateral triangle (the unit circle in the triangle distance), then there is an edge $uv$ in the $\text{TD}_{\theta_1,\theta_2}$-Delaunay graph of a set $S\subseteq\mathbb{R}^2$ exactly when $F(u)F(v)$ is an edge of the TD-Delaunay graph of $F(S)$. This alternative definition immediately leads to a local routing strategy for the $\text{TD}_{\theta_1,\theta_2}$-Delaunay graph of a point set $S$: use the existing routing algorithm from \cite{DBLP:journals/siamcomp/BoseFRV15} on the TD-Delaunay graph of $F(S)$. In Section \ref{sec:compare}, we show that this approach is not optimal. 

Bonichon et al. \cite{DBLP:conf/wg/BonichonGHI10} showed that the TD-Delaunay graph corresponds to the half-$\theta_6$-graph. Analogous to the half-$\theta_6$-graph, Lubiw and Mondal \cite{DBLP:journals/jgaa/LubiwM19} define the $3$-sweep graph, which directly corresponds to the $\text{TD}_{\theta_1,\theta_2}$-Delaunay graph. The 3-sweep graph $G$ gets its name from an alternative, yet equivalent, construction. For each vertex $u$ and each positive cone $C_{u,i}$, include in $G$ the edge to the \textit{nearest} vertex $v\in C_{u,i}$. By \textit{nearest}, we mean that the triangle $T^{u,v}$ is minimal among $\{T^{u,v'}\mid v'\in C_{u,i}\}$. In this way, one can picture the leading edge $\tau^{u,v}_{i-1} \tau^{u,v}_{i+1}$ \textit{sweeping} through cone $C_{u,i}$. Throughout the paper, we assume that no two points lie on a line parallel to a cone boundary. This ensures that each vertex has at most one neighbour in each positive cone.

One desirable property of paths is angle monotonicity. A path is angle monotone with width $\alpha$ if the vector of each edge on the path lies in a cone with apex angle $\alpha$. Lubiw and Mondal show that the 3-sweep graph has certain angle-monotone properties which are used to upper bound the spanning ratio, see Observation \ref{obs:angleMono}.
\begin{obs}\label{obs:angleMono}
    An angle monotone path from $s$ to $t$ with width $\alpha$ has length at most $\frac{|st|}{\cos(\alpha/2)}$ \cite{DBLP:journals/jgaa/LubiwM19}. 
\end{obs}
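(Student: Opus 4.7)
The plan is a direct projection argument onto the bisector of the cone that contains every edge vector of the path. Let $\vec{b}$ denote a unit vector along this bisector. By the definition of angle monotonicity with width $\alpha$, every edge vector $v_i - v_{i-1}$ of the path $s = v_0, v_1, \ldots, v_k = t$ makes angle at most $\alpha/2$ with $\vec{b}$.

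First I would observe that the inner product $\langle v_i - v_{i-1}, \vec{b} \rangle$ is at least $|v_i - v_{i-1}| \cos(\alpha/2)$, since cosine is decreasing on $[0, \pi/2]$ and we may assume $\alpha/2 < \pi/2$ (otherwise the claimed upper bound is vacuous). Next I would sum these inner products over all $k$ edges: by linearity the sum telescopes to $\langle t - s, \vec{b}\rangle$, which is at most $|t - s| = |st|$ by Cauchy--Schwarz. Chaining the two inequalities yields
\[
\cos(\alpha/2) \sum_{i=1}^{k} |v_i - v_{i-1}| \;\le\; \sum_{i=1}^{k} \langle v_i - v_{i-1}, \vec{b}\rangle \;=\; \langle t - s, \vec{b}\rangle \;\le\; |st|,
\]
so dividing by $\cos(\alpha/2)$ gives the stated bound on the total path length.

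There is no real obstacle here: the argument is a one-line projection inequality, and the only care needed is to ensure that the inner product with $\vec{b}$ is non-negative on each edge, which follows immediately from the width-$\alpha$ constraint together with $\alpha < \pi$. I would additionally remark that the bound is tight in the limit of a zigzag path whose edges alternate between the two extreme directions of the cone, which provides intuition for why the constant $\cos(\alpha/2)$ (rather than anything smaller) appears in the denominator.
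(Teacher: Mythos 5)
Your projection argument is correct and complete: this observation is stated in the paper only with a citation to Lubiw and Mondal, and the standard proof there is exactly this telescoping projection onto the cone bisector, with the same non-degeneracy caveat $\alpha<\pi$ and the same zigzag tightness example. Nothing to add.
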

In \cite{DBLP:journals/jgaa/LubiwM19}, Lubiw and Mondal also define a $k$-layered 3-sweep graph by combining $k$ copies of rotated 3-sweep graphs, and provide a local routing algorithm that finds angle monotone paths in $k$-layered 3-sweep graphs. Note that since $k$ is at least $4$, their routing algorithm does not apply to $\text{TD}_{\theta_1,\theta_2}$-Delaunay graphs.

\subsection{Our Contributions}

In Section \ref{sec:span}, we prove that $\frac{1}{\sin(\theta_1/2)}$ is a lower bound on the spanning ratio of $\text{TD}_{\theta_1,\theta_2}$-Delaunay graphs which matches the best known upper bound \cite{DBLP:journals/jgaa/LubiwM19}. Then in Section \ref{sec:route}, we provide a lower bound on the routing ratio by showing that there exist $\text{TD}_{\theta_1,\theta_2}$-Delaunay graphs for which the routing ratio of any $k$-local routing algorithms is at least as large as $C(\theta_1,\theta_2)$. Then, we show that our lower bound is tight by providing an online local routing algorithm for $\text{TD}_{\theta_1,\theta_2}$-Delaunay graphs with a routing ratio of $C(\theta_1,\theta_2)$. Finally, in Section \ref{sec:compare}, we compare our optimal routing algorithm to the previously best-known approach to routing in $\text{TD}_{\theta_1,\theta_2}$-Delaunay graphs.  

\section{Spanning Ratio}\label{sec:span}
We present a lower bound in the following proposition.
\begin{prop}\label{prop:TLBSR}
    There exists a set of points $S\subseteq\mathbb{R}^2$ such that the $\text{TD}_{\theta_1,\theta_2}$-Delaunay graph of $S$ has a spanning ratio of exactly $\frac{1}{\sin(\theta_1/2)}-\epsilon$ for any $\epsilon>0$.
\end{prop}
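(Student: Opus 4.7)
The approach is to exhibit, for every $\epsilon>0$, a point set $S$ whose $\text{TD}_{\theta_1,\theta_2}$-Delaunay graph contains two vertices $s,t$ with $d_G(s,t)\geq (\tfrac{1}{\sin(\theta_1/2)}-\epsilon)\,|st|$. The starting observation is that the target bound can be rewritten as $\tfrac{1}{\cos((\pi-\theta_1)/2)}$, which is exactly the worst-case length provided by Observation~\ref{obs:angleMono} for an angle-monotone path of width $\pi-\theta_1$. This extreme is realized by a two-edge zig-zag $s\to m\to t$ whose edges make angles $\pm(\pi-\theta_1)/2$ with the segment $st$; equivalently, the triangle $smt$ is Euclidean-isosceles with apex angle $\theta_1$ at $m$ and $|sm|=|mt|=\tfrac{|st|}{2\sin(\theta_1/2)}$. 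A simple check of the coordinates confirms $|sm|+|mt|=|st|/\sin(\theta_1/2)$. Notice that $smt$ is \emph{not} a scaled translate of $\triangle$ unless $\theta_2=\theta_3$; using a similar copy of $\triangle$ in the three roles $s,m,t$ would only give the weaker ratio $\cos((\theta_2-\theta_3)/2)/\sin(\theta_1/2)$, so the construction must sit outside the similarity class of $\triangle$.

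The plan is to place $s=(0,0)$, $t=(1,0)$, and $m=(\tfrac{1}{2},\tfrac{1}{2}\cot(\theta_1/2))$, and then choose the orientation of $\triangle$ so that (a) $m$ lies in positive cones of $s$ and of $t$ that admit $sm$ and $mt$ as 3-sweep edges, and (b) the cone-boundary directions at $s$ and at $t$ align with the extreme directions $\pm(\pi-\theta_1)/2$ relative to the $x$-axis. With the orientation fixed, I augment $S$ by blocker points serving two purposes: first, to guarantee that the direct edge $st$ is absent by placing a point inside every scaled translate of $\triangle$ with $s$ and $t$ on its boundary; and second, to prevent the blockers themselves from providing shortcuts, by surrounding each of them with a scaled-down copy of the entire construction. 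Iterating this self-similar construction to depth $k=k(\epsilon)$ yields a finite point set in which every $s$-$t$ path accumulates a factor arbitrarily close to $\tfrac{1}{\sin(\theta_1/2)}$.

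The main obstacle is the inductive verification that no unintended shortcuts appear. At each recursion level one must check (i) that the intended long edges (the top-level $sm,mt$ and their scaled copies) survive as 3-sweep edges, i.e.\ that no closer vertex appears in the corresponding positive cone, and (ii) that edges between distinct recursive blocks do not combine into a cheaper $s$-$t$ path. Both require a careful case analysis tied to which of $\tau_1,\tau_2,\tau_3$ corresponds to each of the positive cones containing $s,t,m$ and the blockers, and a bookkeeping argument showing that any path must, at every recursion level, traverse an isosceles sub-configuration whose length-to-span ratio is at least $\tfrac{1}{\sin(\theta_1/2)}-O(\epsilon)$. The choice of orientation in (b) is what makes this possible: it is precisely the alignment that realizes the extremal angle-monotone path from the upper bound, so there is no slack to exploit through a different cone assignment.
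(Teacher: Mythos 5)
You have correctly identified the extremal geometric configuration: a two-edge detour $s\to m\to t$ forming a Euclidean-isosceles triangle with apex angle $\theta_1$ at $m$, which is exactly what realizes the ratio $\frac{1}{\sin(\theta_1/2)}$. However, the mechanism you propose for forcing the path through $m$ --- blockers inside every witness triangle for the edge $st$, each surrounded by a recursively scaled copy of the whole construction --- is where the entire difficulty lives, and you have not carried it out. Worse, this scheme has a structural problem, not just a bookkeeping one: any blocker $m'$ that kills the edge $st$ must lie inside $T^{s,t}$, hence close to the segment $st$ in the relevant direction, so $|sm'|+|m't|\approx|st|$ and the two-hop path through $m'$ is nearly straight. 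To rule that path out you need the recursive copy around $m'$ to already force a detour of factor $\frac{1}{\sin(\theta_1/2)}$ at \emph{every} level, which is circular: each level of the recursion must deliver the full bound you are trying to prove, rather than each level contributing an independent multiplicative loss. You would also need to verify that the recursive copies do not destroy the top-level edges $sm$ and $mt$ (a new point in the positive cone of $s$ containing $m$, closer than $m$, deletes the edge $sm$ under the 3-sweep definition). None of this is established.

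The paper avoids all of this with a five-point set: $S=\{a,b,\tau_1,\tau_2,\tau_3\}$, where $\tau_1,\tau_2,\tau_3$ are the corners of $\triangle$ itself and $a,b$ are placed just outside $\triangle$, each at distance $r=\frac{1}{2}\min(|\tau_1\tau_2|,|\tau_1\tau_3|)$ from $\tau_1$, with $a$ arbitrarily close to the side $\tau_1\tau_2$ and $b$ arbitrarily close to $\tau_1\tau_3$. Because $a$ and $b$ straddle the cone $C_{\tau_1,1}$ of apex angle $\theta_1$, the vertex $\tau_1$ lies inside any witness triangle for $ab$, so that edge is absent; the edges $a\tau_1$ and $\tau_1 b$ exist; and $\tau_2,\tau_3$ are too far to offer a shortcut. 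The shortest path is $a\to\tau_1\to b$ of length $2r$ while $|ab|\to 2r\sin(\theta_1/2)$, giving the bound directly --- the blocking is automatic rather than engineered by recursion. I recommend you replace the recursive scheme with a construction of this type; as it stands, your proposal names the right target ratio but the proof of the lower bound is a genuine gap.
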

\begin{proof}
    We will construct a point set $S=\{ a,b,\tau_1,\tau_2, \tau_3 \}$ such that $d_G(a,b)$ approaches $\frac{|ab|}{\sin(\theta_1/2)}$, where $G$ is the $\text{TD}_{\theta_1,\theta_2}$-Delaunay graph of $S$. See Figure \ref{fig:SRLB}. Place two points $a,b$ outside $\triangle$ each at a distance $\frac{\min(|\tau_1\tau_2|,|\tau_1\tau_3|)}{2}$ from $\tau_1$, with $a$ arbitrarily close to $\tau_1\tau_2$ and $b$ arbitrarily close to $\tau_1\tau_3$. By construction of $S$, $G$ has edges  $\tau_1\tau_2$, $\tau_2\tau_3$, $\tau_1\tau_3$,    $\tau_1a$, $\tau_2a$, $\tau_1b$ and $\tau_3b$.

\begin{figure}[ht!]
    \centering
    \includegraphics[page=1, scale=0.7]{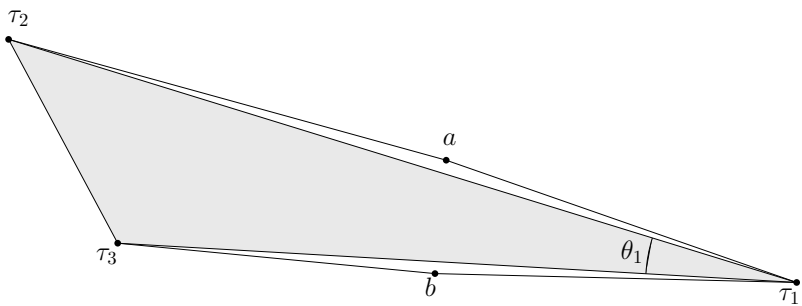}
    \caption{The shortest path from $a$ to $b$ passes through $\tau_1$ in the $\text{TD}_{\theta_1,\theta_2}$-Delaunay graph $G$ of the point set $\{ a,b,\tau_1,\tau_2, \tau_3 \}$. As $a,b$ get closer to $\triangle$, then $d_G(a,b)$ approaches $\frac{|ab|}{\sin(\theta_1/2)}$.\label{fig:SRLB}}
\end{figure}

The shortest path in $G$ from $a$ to $b$ passes through $\tau_1$, meaning the spanning ratio is at least $\frac{|a\tau_1| +|\tau_1b|}{|ab|}$. This value can be made arbitrarily close to $\frac{1}{\sin(\theta_1/2)}$ as $a$ and $b$ move closer to the boundary of $\triangle$. While this point set may not be in general position, the vertices can be perturbed to satisfy the general position constraint.
\end{proof}

The upper bound of $\frac{1}{\sin(\theta_1/2)}$ follows from Lemma 6 of \cite{DBLP:journals/jgaa/LubiwM19} by Lubiw and Mondal. 

\section{Local Routing}\label{sec:route}

Local routing has been studied in many contexts, and in Section \ref{sec:compare}, we will show that the known routing algorithms do not give optimal results in $\text{TD}_{\theta_1,\theta_2}$-Delaunay graphs. In this section, we provide an optimal local routing algorithm. Our approach is to generalize the algorithm from \cite{DBLP:journals/siamcomp/BoseFRV15}, leading to our algorithm (refer to Algorithm \ref{alg:LocalTriRoute}). The key algorithmic insight lies in the threshold for making decisions in routing. Each decision is carefully made to reduce the total path length. The goal of this section is to prove the following theorem.
\begin{theorem}\label{thm:TriRR}
    The routing ratio of the $\text{TD}_{\theta_1,\theta_2}$-Delaunay graph is at most $C(\theta_1,\theta_2)$. Furthermore, this bound is tight in the worst case.
\end{theorem}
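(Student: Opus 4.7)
The plan is to prove the theorem in two halves: an upper bound established by analyzing Algorithm \ref{alg:LocalTriRoute}, and a matching lower bound established by an explicit point-set construction, for every admissible $(\theta_1,\theta_2)$.

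For the upper bound, I would generalize the greedy--in--cone strategy of Bose \etal\ \cite{DBLP:journals/siamcomp/BoseFRV15}. At the current vertex $s'$ with destination $t$, identify the positive cone $C_{s',j}$ that contains $t$, locate the unique neighbour $v\in C_{s',j}$ given by the 3-sweep characterisation, and consider the minimal translate $T^{s',v}$. If $v=t$ or $t\in T^{s',v}$, the remaining path is short and handled directly. Otherwise $t$ lies beyond the leading edge of $T^{s',v}$, and the algorithm must commit to one of the two cone-boundary edges of $T^{s',v}$ incident to $s'$. The crucial piece is a threshold rule that selects whichever choice yields the smaller worst-case tail; this is exactly the $\min(\cdot,\cdot)$ appearing inside $C(\theta_1,\theta_2)$.

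I would analyze the algorithm inductively on a suitable progress measure (e.g.\ the scale of the current minimal enclosing triangle of the routing subproblem), applying the law of sines to $\triangle s'vt$ at each step. Writing $\alpha$ for the angle at $s'$ between the fixed cone ray and $s't$, the law of sines gives $|s'v|/|s't|=\sin(\theta_j-\alpha)/\sin(\theta_{j+1})$ and $|vt|/|s't|=\sin(\alpha)/\sin(\theta_{j-1})$, matching the first two summands of $C(\theta_1,\theta_2)$ exactly. The two entries of the $\min$ then bound the remaining route in the two recursive sub-cases corresponding to the two possible next hops. A telescoping argument, together with the observation that the worst step angle $\alpha$ can be realized only once along the path, recovers the global bound $C(\theta_1,\theta_2)\cdot|st|$. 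The three values of $j$ in the outer $\max$ correspond to the three possible cones containing $t$, treated symmetrically.

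For the matching lower bound, for every $(j,\alpha)$ attaining the maximum in $C(\theta_1,\theta_2)$ I would build a $\text{TD}_{\theta_1,\theta_2}$-Delaunay graph in which $s$ and $t$ are placed so that $t\in C_{s,j}$ with angle $\alpha$ between the cone ray at $s$ and $st$, and in which the unique neighbour of $s$ in $C_{s,j}$ offers two symmetric continuations. Since the router is $k$-local for some fixed $k$, the adversary attaches, outside the $k$-neighbourhood of $s$, a long mandatory detour to whichever of the two branches the router chooses, in the spirit of \propref{TLBSR}. The resulting forced path has length equal to the two leading $\sin$-ratios in $C$ plus the larger of the two tail options, turning the $\min$ into the adversary's $\max$ and yielding routing ratio arbitrarily close to $C(\theta_1,\theta_2)$ after perturbing into general position.

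The main obstacle is the case analysis and bookkeeping on the upper-bound side: the threshold governing the $\min$ must be both computable locally from $N_1(s')$ and $t$ alone, and provably worst-case optimal, which requires checking that every geometric configuration of $t$ relative to $T^{s',v}$ falls into one of a small number of recursive cases whose accumulated length is controlled by the same expression $C(\theta_1,\theta_2)$. A secondary difficulty is ensuring the lower-bound construction is a genuine $\text{TD}_{\theta_1,\theta_2}$-Delaunay graph with no empty-triangle edges short-circuiting the forced detour; this will need a careful perturbation into general position, again in the style of \propref{TLBSR}.
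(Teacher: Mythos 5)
Your overall architecture (an upper bound from analyzing the local algorithm plus a matching adversarial lower bound against $k$-local routers) is the same as the paper's, but both halves as sketched contain genuine gaps. On the upper-bound side you analyze only the configuration where $t$ lies in a \emph{positive} cone of the current vertex, where a neighbour toward $t$ is guaranteed; that is the easy case, with ratio at most $\frac{1}{\sin(\theta_1/2)}$. The entire $\min(\cdot,\cdot)$ structure of $C(\theta_1,\theta_2)$ comes from the situation where $t$ lies in a \emph{negative} cone $\overline{C_{p,i}}$ of the current vertex $p$: then $p$ sits on the leading edge of the empty triangle $T^{p,t}$, may have no neighbour in the region toward $t$, and must commit to detouring clockwise or counterclockwise around $T^{p,t}$; the threshold minimizing $|p\tau^{p,t}_{i+j}|+|\tau^{p,t}_{i+j}\tau^{p,t}_{i-j}|+|\tau^{p,t}_{i-j}t|$ is what produces the two entries of the $\min$. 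Relatedly, the first two summands of $C$ are, by the law of sines, the ratios $\frac{|s\tau^{s,t}_{i+1}|}{|st|}$ and $\frac{|s\tau^{s,t}_{i-1}|}{|st|}$ --- distances from $s$ to the corners of the empty triangle --- not $\frac{|s'v|}{|s't|}$ and $\frac{|vt|}{|s't|}$ for a graph neighbour $v$. Finally, your ``telescoping'' step is the actual crux and is not justified: the scale of the enclosing triangle is not a valid progress measure, since the path can accumulate length while that scale barely changes. The paper instead defines a per-vertex potential $\Phi(p,t)$ --- a path along the boundary of $T^{p,t}$, chosen differently in each of the algorithm's four cases --- and verifies $|pv|+\Phi(v,t)\le\Phi(p,t)$ at every step by projection and triangle-inequality arguments; some such invariant is indispensable and your sketch does not supply one.

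On the lower-bound side the two-graph indistinguishability idea is right, but placing $t$ in a positive cone of $s$ cannot work: in that configuration the router simply follows the unique edge in that cone and achieves ratio at most $\frac{1}{\sin(\theta_1/2)}<C(\theta_1,\theta_2)$. The hard instance must put $s$ on the leading edge $\tau_1\tau_2$ of the empty triangle with $t=\tau_3$ at the opposite apex (so $t$ is in a negative cone of $s$), with chains of points hugging $\tau_1$ and $\tau_2$ so that the two graphs agree on the $k$-neighbourhood of $s$ yet force opposite full detours. Also, the quantity you extract is off: the router picks the better of its two indistinguishable options, so the forced length is the \emph{minimum} of the two complete detours (which is exactly the $\min$ term appearing in $C$), not ``the larger of the two tail options''; if the adversary could force the maximum, the lower bound would exceed the upper bound.
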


We will start with the following proposition:
\begin{prop}\label{prop:LBRR}
    Let $k$ be a positive integer. Every $k$-local routing algorithm for $\text{TD}_{\theta_1,\theta_2}$-Delaunay graphs must have a routing ratio at least $C(\theta_1,\theta_2)-\epsilon$ for any $\epsilon>0$.
\end{prop}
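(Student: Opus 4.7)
My plan is to prove Proposition~\ref{prop:LBRR} by constructing, for each $\epsilon>0$, each index $j\in\{1,2,3\}$, and each angle $\alpha\in[0,\theta_j]$, a $\text{TD}_{\theta_1,\theta_2}$-Delaunay graph on which every $k$-local routing algorithm produces an $s$--$t$ path of length at least the inner expression of $C(\theta_1,\theta_2)$ at $(j,\alpha)$ times $|st|$, minus $\epsilon$. Taking the supremum over $(j,\alpha)$ then yields $C(\theta_1,\theta_2)-\epsilon$. The construction generalises the zig-zag-chain configuration used by Bose, Fagerberg, van Renssen and Verdonschot~\cite{DBLP:journals/siamcomp/BoseFRV15} for the equilateral TD case; the extra parameters $(j,\alpha)$ let us sweep across all shapes of $\triangle$ and all tilts of the chain.

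Fix $(j,\alpha)$ and normalise $|st|=1$. Place $t$ in the negative cone $\overline{C_{s,j+1}}$ so that the angle at corner $\tau_j^{s,t}$ of $T^{s,t}$ between the side $\tau_{j-1}^{s,t}\tau_j^{s,t}$ and the segment $st$ equals $\alpha$. Insert two vertices $s_1,s_2$ so that $s_1$ is the unique neighbour of $s$ in $C_{s,j+1}$ and $s_2$ is the unique neighbour of $s_1$ in $C_{s_1,j+1}$. Applying the law of sines to the scaled translates $T^{s,s_1}$ and $T^{s_1,s_2}$ of $\triangle$, whose internal angles are pinned down by $\alpha$, gives $|ss_1|=\sin(\theta_j-\alpha)/\sin(\theta_{j+1})$ and $|s_1s_2|=\sin(\alpha)/\sin(\theta_{j-1})$. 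By the 3-sweep construction of the graph, any routing algorithm at $s$ is forced to step to $s_1$ (the only occupant of $C_{s,j+1}$) and similarly forced to step from $s_1$ to $s_2$, so every output path accumulates the first two summands of the target expression before reaching the decision vertex $s_2$.

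At $s_2$, I place two candidate successors $u_A,u_B$, each in a distinct positive cone of $s_2$, and complete each to $t$ through an auxiliary vertex $v_A,v_B$. A further law-of-sines computation gives the total length of the $u_A$-completion as $\sin(\alpha)/\sin(\theta_{j-1})+\sin(\alpha+\theta_{j-1})/\sin(\theta_{j+1})$ and of the $u_B$-completion as $\sin(\theta_j-\alpha)/\sin(\theta_{j+1})+\sin(\alpha+\theta_{j-1})/\sin(\theta_{j-1})$, which match the two arguments of the $\min$ in $C(\theta_1,\theta_2)$. Since $s_2$ has exactly two neighbours in positive cones reaching $t$, every algorithm must commit to one of $u_A,u_B$, and the remaining path length is at least the minimum of the two completion lengths; combined with the forced prefix, this yields the inner expression of $C(\theta_1,\theta_2)$ on this instance.

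The delicate issue is that the Lubiw--Mondal upper bound guarantees a shortest $s$--$t$ path of length at most $|st|/\sin(\theta_1/2)$ in the constructed graph, which can be strictly shorter than the chain path (as in the equilateral case $\theta_1=\pi/3$, where $C=5/\sqrt{3}>2$). To prevent a $k$-local algorithm from exploiting any such shortcut I would scale the construction so that every vertex certifying the short path lies outside the $k$-neighbourhood of every chain vertex, and, if necessary, pair the instance with a mirror twin that is $k$-locally indistinguishable at $s_2$ so that a deterministic algorithm must commit to the same successor in both twins and is then served the twin with the worse commitment. The main obstacle is a careful verification of the $\text{TD}_{\theta_1,\theta_2}$-Delaunay edge set: every claimed edge must have an empty witness copy of $\triangle$, no spurious edges may appear that let the algorithm bypass the forced chain or the branching at $s_2$, and the hidden shortcut must exist (so the spanning-ratio upper bound is preserved) without being visible from any chain vertex. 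This demands a case analysis on the relative sizes of $\theta_1,\theta_2,\theta_3$ and on where $\alpha$ sits within $[0,\theta_j]$, but each case reduces to routine angle chasing once the law-of-sines correspondence is set up, with a perturbation into general position accounting for the $\epsilon$ slack.
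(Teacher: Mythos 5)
There is a genuine gap, and it is fatal to the construction as described. Your main construction is a \emph{single} graph in which a forced two-edge prefix $s\to s_1\to s_2$ is followed by a branch whose two completions realize the two arguments of the $\min$. But if every $s$--$t$ path in one fixed graph really had length at least the inner expression of $C(\theta_1,\theta_2)$ times $|st|$, this would in particular bound the \emph{shortest} path, i.e.\ it would be a lower bound on the spanning ratio --- contradicting the Lubiw--Mondal upper bound of $1/\sin(\theta_1/2)$, which is strictly smaller than $C(\theta_1,\theta_2)$ (e.g.\ $2$ versus $5/\sqrt{3}$ in the equilateral case). So either your claimed edge set is not the actual $\text{TD}_{\theta_1,\theta_2}$-Delaunay graph (spurious shortcut edges must appear), or the ``forced'' prefix is not forced. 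You notice this tension, but the remedy you propose --- ``scale the construction so that the shortcut lies outside the $k$-neighbourhood'' --- cannot work: the $k$-neighbourhood is combinatorial (all vertices reachable by at most $k$ edges), so metric scaling hides nothing. Hiding information from a $k$-local algorithm requires inserting $\Omega(k)$ intermediate vertices between the decision point and the distinguishing feature; your construction has a constant number of vertices, so for $k\geq 3$ the algorithm sees the entire graph and simply follows the shortest path, achieving ratio at most $1/\sin(\theta_1/2)<C(\theta_1,\theta_2)$.

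The two-instance adversary argument, which you relegate to an ``if necessary'' aside, is in fact the entire content of the proof and must be built in from the start. The paper constructs two graphs $G_1$ and $G_2$ whose $k$-neighbourhoods of $s$ coincide, using a chain of $2k$ auxiliary vertices $p_1,q_1,\dots,p_k,q_k$ (clustered near $\tau_2$ and $\tau_1$ respectively, generated by similar triangles) so that the only difference --- whether $\tau_3$'s unique incoming neighbour is $q_k$ or an extra vertex $p_{k+1}$ --- lies more than $k$ hops from $s$. The branching decision is thus at $s$ itself (go toward $\tau_2$ or toward $\tau_1$), and in each graph the shortest path is short (consistent with the spanning bound) while the algorithm that commits to the wrong side must traverse the full crossing $|\tau_1\tau_2|=|s\tau_1|+|s\tau_2|$; the $\min$ in $C(\theta_1,\theta_2)$ is the minimum over the algorithm's two possible commitments of the adversary's punishment, not a minimum the algorithm gets to evaluate inside one graph. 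To repair your proof you would need to (i) replace the single instance by such a pair, (ii) lengthen the chain as a function of $k$, and (iii) re-verify the Delaunay edge sets of both instances --- at which point you have essentially reconstructed the paper's argument.
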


\begin{proof}
We will construct two vertex sets $S_1$ and $S_2$ and refer to their corresponding $\text{TD}_{\theta_1,\theta_2}$-Delaunay graphs as $G_1$ and $G_2$. Importantly, the $k$-neighbourhoods of $G_1$ and $G_2$ around the start vertices $s$ are identical, however the rest of the graphs will be vastly different. In this way any algorithm that performs well for one graph will not for the other. These are analogous to the constructions of Figure 12 in \cite{DBLP:journals/siamcomp/BoseFRV15}.
\begin{figure}[ht!]
    \centering
    \includegraphics[page=20, scale=0.7]{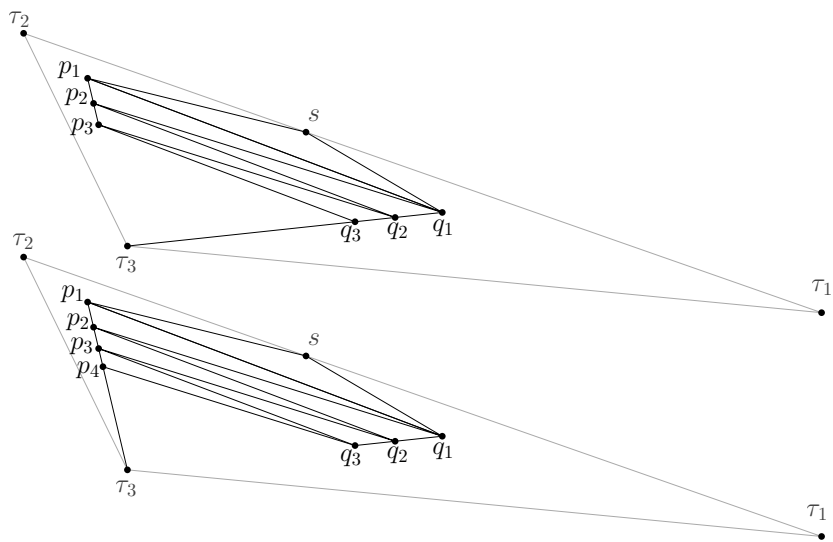}
    \caption{The $\text{TD}_{\theta_1,\theta_2}$-Delaunay graphs $G_1$ and $G_2$ constructed for the lower bound of $k$-local routing from $s$ to $\tau_3$. In this example, $k=3$.\label{fig:kLocalInstances}}
\end{figure}
Assume $j=3$ maximizes the expression of $C(\theta_1,\theta_2)$. Let $s$ be on $\tau_1\tau_2$. Place $p_1$ inside $C_{s,1}\cap C_{\tau_2,2}$ arbitrarily close to $\tau_2$, then place $q_1$ in $C_{s,2}\cap C_{p_1,2}\cap C_{\tau_1,1}$ arbitrarily close to $\tau_1$. Next, place $p_2$ on segment $\tau_3 p_1$ in cone $C_{q_1,1}$, arbitrarily close to $p_1$. Next, for $i=2,...,k$, place $q_i$ such that triangle $\tau_3,p_i,q_i$ is similar to $\tau_3,p_1,q_1$, and place $p_{i+1}$ such that triangle $\tau_3,p_{i+1},q_i$ is similar to $\tau_3,p_2,q_1$. Finally, let $S_1=\{s,p_1,...,p_k, q_1,...,q_k, \tau_3\}$ and $S_2=\{s,p_1,...,p_k,p_{k+1}, q_1,...,q_k, \tau_3\}$. This construction ensures that $G_1$ contains the edges $sp_1,sq_1,p_1q_1,p_{i-1}p_i,q_{i-1}q_i,q_{i-1} p_{i},p_i q_{i},q_k\tau_3$ where $i\in\{2,...,k\}$. On the other hand, $G_2$ contains the edges $sp_1, sq_1, p_1q_1, p_{i-1}p_i, q_{i-1}q_i$, $q_{i-1} p_{i}, p_i q_{i},p_{k}p_{k+1}, q_{k}p_{k+1},p_{k+1}\tau_3$ where $i\in\{2,...,k\}$. Importantly, $G_1$ does not contain the edge $\tau_3p_{k}$ because $q_k$ is the closest neighbour to $\tau_3$ in the cone $C_{\tau_3,3}$. Similarly, $G_2$ does not contain the edge $\tau_3q_{k}$ because $p_{k+1}$ is the closest neighbour to $\tau_3$ in the cone $C_{\tau_3,3}$. Similarly, the edges $p_k\tau_3$ and $q_k\tau_3$ do not exist in $G_1$ and $G_2$, respectively, since $\tau_3$ is in a negative cone of $p_k$ and $q_k$.

Since the $k$-neighbourhood of $s$ in $G_1$ and $G_2$ is $\{s,p_1,...,p_k,q_1,...,q_k\}$, then any algorithm routing from $s$ to $\tau_3$ will choose the same first vertex ($p_1$ or $q_1$) in $G_1$ and $G_2$. Moreover, $\tau_3$ only has one neighbour in each graph, so any path from $s$ to $\tau_3$ must pass through $q_k$ in $G_1$ and through $p_{k+1}$ in $G_2$. Then any algorithm that visits $p_1$ first will output a path from $s$ to $\tau_3$ in $G_1$ of length at least $|sp_1|+|p_1q_k|+|q_k\tau_3|$. On the other hand, any algorithm that chooses to visit $q_1$ first will output a path from $s$ to $\tau_3$ in $G_2$ having length at least $|sq_1|+|q_1p_{k+1}|+|p_{k+1}\tau_3|$. Since $p_1$ is arbitrarily close to $\tau_2$, $p_2$ is arbitrarily close to $p_1$, and $q_1$ is arbitrarily close to $\tau_1$, then each $p_i$ is arbitrarily close to $\tau_2$ and each $q_i$ is arbitrarily close to $\tau_1$. Then, for any $\epsilon>0$, the routing ratio of any algorithm is at least
\begin{align*}
&\frac{\min(|s\tau_2|+|\tau_2\tau_1|+|\tau_1\tau_3|,|s\tau_1|+|\tau_1\tau_2|+|\tau_2\tau_3|)}{|s\tau_3|}-\epsilon\\
&=\frac{|s\tau_1|}{|s\tau_3|}+\frac{|s\tau_2|}{|s\tau_3|}+\min\Big(\frac{|s\tau_2|}{|s\tau_3|}+\frac{|\tau_1\tau_3|}{|s\tau_3|}, \frac{|s\tau_1|}{|s\tau_3|}+\frac{|\tau_2\tau_3|}{|s\tau_3|}\Big)-\epsilon.
\end{align*}
Finally, we obtain $C(\theta_1,\theta_2)-\epsilon$ by the law of sines in triangles $s\tau_2\tau_3$ and $\tau_1 s\tau_3$, where angle $\alpha:=\angle \tau_2\tau_3 s$, since
\begin{align*}
    \frac{|s\tau_2|}{\sin(\alpha)}=\frac{|s\tau_3|}{\sin(\theta_2)}=\frac{|\tau_2\tau_3|}{\sin(\pi-\alpha-\theta_2)},\quad\text{and}\quad
    \frac{|s\tau_1|}{\sin(\theta_3-\alpha)}=\frac{|\tau_1\tau_3|}{\sin(\alpha+\theta_2)}=\frac{|s\tau_3|}{\sin(\theta_1)}.
\end{align*}
\end{proof}

\subsection{Local Routing Algorithm}

In this section, we present Algorithm \ref{alg:LocalTriRoute} which is a $1$-local, $0$-memory routing algorithm for $\text{TD}_{\theta_1,\theta_2}$-Delaunay graphs. It is generalized from the routing algorithm by Bose et al. \cite{DBLP:journals/siamcomp/BoseFRV15}.
Let $s$ be the start vertex, $t$ be the target vertex, and $p$ be the current vertex. At each step of Algorithm \ref{alg:LocalTriRoute}, the next vertex is chosen based on the four cases \ref{case:a}, \ref{case:b}, \ref{case:c}, or \ref{case:d}. To ease notation for cases \ref{case:b}, \ref{case:c}, and \ref{case:d}, we will define the left, middle, and right regions of $p$: $X_L$, $X_M$, and $X_R$ respectively, pictured in Figure \ref{fig:Xregions}. When $t$ lies in a negative cone $\overline{C_{p,i}}$, then let $X_L:= C_{p,i-1}\cap T^{p,t}$, $X_R:= C_{p,i+1}\cap T^{p,t}$, and $X_M:= \overline{C_{p,i}}\cap T^{p,t}$.

\begin{figure}
    \centering
    \includegraphics[page=6, scale=0.7]{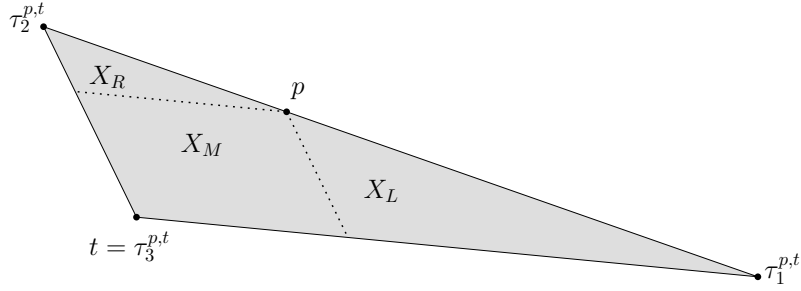}
    \caption{ $T^{p,t}$ is the smallest scaled translate of $\triangle$ with $p$ and $t$ on its boundary.}\label{fig:Xregions}
\end{figure}

\RestyleAlgo{ruled}

\begin{algorithm}
\caption{Local Routing algorithm in $\text{TD}_{\theta_1,\theta_2}$-Delaunay graph $G$}\label{alg:LocalTriRoute}
\KwData{Two points $s,t\in S$}
\KwResult{Path in $G$ from $s$ to $t$}
$p\gets s$\;
\While{$p \neq t$}{
Choose the next vertex $v$ based on the following cases, then set $p\gets v$
\begin{enumerate}[label=(\roman*)]
    \item\label{case:a} Case: \textit{$t$ lies in a positive cone $C_{p,i}$.}
    
    Follow the unique edge $pv$ in $C_{p,i}$.

    \item\label{case:b} Case: \textit{$t$ lies in a negative cone $\overline{C_{p,i}}$, and both regions $X_L$ and $X_R$ are empty } 

    Let $j\in\{1,-1\}$ minimize $|p\tau^{p,t}_{i+j}|+|\tau^{p,t}_{i+j}t|$. Choose the neighbour in $X_M$ closest to $C_{p,i+j}$ in cyclic order about $p$.
    
    \item\label{case:c} Case: \textit{$t$ lies in a negative cone $\overline{C_{p,i}}$, and only one region of $\{X_L,X_R\}$ is empty. } 

    If $p$ has neighbours in $X_M$, choose the neighbour $v$ closest to the empty region in cyclic order about $p$. Otherwise, choose the unique neighbour in the non-empty region.

    \item\label{case:d} Case: \textit{$t$ lies in a negative cone $\overline{C_{p,i}}$, and neither $X_L$ nor $X_R$ is empty.} 

    If $p$ has neighbours in $X_M$, choose an arbitrary one. Otherwise, let $j\in\{1,-1\}$ minimize $|p\tau^{p,t}_{i+j}|+|\tau^{p,t}_{i-j}t|$, and choose $v$ in $C_{p,i+j}$.
\end{enumerate}}
\end{algorithm}

In short, the algorithm prefers to route in the region towards $t$, however when this is not possible, it stays close to a neighbouring empty region or the side that minimizes a possible detour. Now we will prove the following upper bound:

\begin{prop}\label{prop:RRUB}
    Let $s,t$ be two vertices in a $\text{TD}_{\theta_1,\theta_2}$-Delaunay graph $G$. When $t$ is in a negative cone of $s$, then the path $P_{s,t}$ output by Algorithm \ref{alg:LocalTriRoute} from $s$ to $t$ in $G$ has ratio $\frac{|P_{s,t}|}{|st|}$ at most $C(\theta_1,\theta_2)$. When $t$ is in a positive cone of $s$, then $\frac{|P_{s,t}|}{|st|}$ is at most $\frac{1}{\sin(\theta_1/2)}$.
\end{prop}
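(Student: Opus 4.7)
The plan is to split the argument along the two cases in the proposition and, for each, follow the template of the TD-Delaunay proof of Bose, Fagerberg, van Renssen, and Verdonschot~\cite{DBLP:journals/siamcomp/BoseFRV15}, generalised to arbitrary $\theta_1,\theta_2$.

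For the positive-cone case with $t\in C_{s,i}$, the plan is to show that the entire path $P_{s,t}$ produced by Algorithm~\ref{alg:LocalTriRoute} is angle-monotone of width $\pi-\theta_i$ about the direction $st$, following the Lubiw--Mondal angle-monotonicity argument for the 3-sweep graph (Lemma~6 of~\cite{DBLP:journals/jgaa/LubiwM19}). The key local invariant is that each edge $pv$ chosen by Algorithm~\ref{alg:LocalTriRoute} lies inside $T^{p,t}\subseteq T^{s,t}$, and the two sides of $T^{s,t}$ meeting at the corner $\tau_i^{s,t}$ span a wedge of width $\pi-\theta_i$ around $st$; verifying this in all four branches (i)--(iv) is routine once the choice of neighbour is restricted to $T^{p,t}$. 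Observation~\ref{obs:angleMono} then yields $|P_{s,t}|\le|st|/\cos((\pi-\theta_i)/2)=|st|/\sin(\theta_i/2)\le|st|/\sin(\theta_1/2)$, since $\theta_i\ge\theta_1$. This recovers the spanning-ratio upper bound that accompanies Proposition~\ref{prop:TLBSR}.

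For the negative-cone case, I would induct on $|S\cap T^{s,t}|$. Relabel the cones so that $t\in\overline{C_{s,j}}$ and set $\alpha:=\angle\tau_{j-1}^{s,t}\tau_j^{s,t}s$. As in the proof of Proposition~\ref{prop:LBRR}, the law of sines applied inside the triangles formed by $s$, $t$, and the corners $\tau_k^{s,t}$ expresses every ratio $|s\tau_k^{s,t}|/|st|$ and $|\tau_k^{s,t}t|/|st|$ in terms of $\theta_1,\theta_2,\theta_3,\alpha$, and these functions are exactly the summands appearing inside the $\max$ defining $C(\theta_1,\theta_2)$. The base case $|S\cap T^{s,t}|=2$ forces $st$ to be an edge and the ratio equals~$1$. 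For the inductive step I analyse Cases~(ii)--(iv) separately: in Case~(ii) the algorithm moves to $v\in X_M$ near the corner $\tau_{j+\sigma}^{s,t}$ minimising $|s\tau_{j+\sigma}^{s,t}|+|\tau_{j+\sigma}^{s,t}t|$, and $|sv|+d_G(v,t)$ is bounded by combining the triangle inequality with the inductive bound on the strictly smaller sub-instance rooted at $v$; in Case~(iii) the sweep along the unique non-empty side of $\{X_L,X_R\}$ is charged against the corresponding term in the $\min$ inside $C(\theta_1,\theta_2)$; in Case~(iv) the algorithmic choice of $\sigma\in\{+1,-1\}$ realises the $\min$ in the definition of $C(\theta_1,\theta_2)$ verbatim.

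The main obstacle is the bookkeeping for Cases~(iii) and~(iv), where several edges can accumulate before the instance collapses to the base case, and the cumulative length must still be charged to a single term of $C(\theta_1,\theta_2)$. This requires tracking how $\alpha$ evolves at each vertex of $P_{s,t}$ and arguing that the worst trajectory is exactly the configuration captured by the supremum over $j\in\{1,2,3\}$ and $0\le\alpha\le\theta_j$ in the definition of $C(\theta_1,\theta_2)$. I would adapt the trigonometric identities used by Bose et al.~\cite{DBLP:journals/siamcomp/BoseFRV15} in the equilateral case to the general angles $\theta_1\le\theta_2\le\theta_3$ and verify that their inductive invariant survives the substitution, so that the bound $C(\theta_1,\theta_2)|st|$ emerges at the root of the recursion.
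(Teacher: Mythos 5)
Your outline has the right overall shape---a per-step charging argument against geometric quantities attached to $T^{p,t}$, closed out by the law of sines---but it is missing the two ingredients that make the paper's argument go through, and you yourself flag the place where they are needed as an unresolved ``main obstacle.'' The paper does not induct on $|S\cap T^{s,t}|$; it defines an explicit potential $\Phi(p,t)$ for each of the four algorithmic cases and proves the single step inequality $|pv|+\Phi(v,t)\le\Phi(p,t)$, so the path length telescopes to $\Phi(s,t)$. The crucial definition is the case~\ref{case:d} potential $\min_{j=\pm1}\bigl(|p\tau^{p,t}_{i+j}|+|\tau^{p,t}_{i+j}\tau^{p,t}_{i-j}|+|\tau^{p,t}_{i-j}t|\bigr)$, which budgets one full side of $T^{p,t}$ as the detour cost; this is exactly the term that produces the $\min(\cdot,\cdot)$ inside $C(\theta_1,\theta_2)$. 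The second ingredient is the invariant that makes the detour payable only once: in cases \ref{case:b} and \ref{case:c} the algorithm picks the neighbour closest in cyclic order to an empty region, which guarantees that the corresponding region of the next vertex $v$ is also empty, so after leaving case \ref{case:d} the walk can never re-enter it and the potential never jumps back up to the three-link form. Without this case-transition invariant, your worry that ``the cumulative length must still be charged to a single term'' is not mere bookkeeping---naively, each return to case \ref{case:d} could add another side-length detour and the bound would fail. The real difficulty is this combinatorial invariant, not the trigonometric identities you propose to adapt from the equilateral case.

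Separately, your positive-cone argument overclaims. The full output path $P_{s,t}$ need not be angle-monotone of width $\pi-\theta_i$: after the first edge, $t$ can land in a negative cone of the new vertex and the walk can turn back toward $s$; moreover, the two sides of $T^{s,t}$ meeting at the corner $\tau_i^{s,t}=s$ span a wedge of width $\theta_i$, not $\pi-\theta_i$. What the paper actually shows is that the path length is at most the case~\ref{case:a} potential $\max_{j=\pm1}\bigl(|s\tau^{s,t}_{i+j}|+|\tau^{s,t}_{i+j}t|\bigr)$, and it is this \emph{two-link} path through the corner $\tau^{s,t}_{i+j}$ that is angle-monotone with width $\pi-\theta_{i+j}$, yielding $1/\sin(\theta_{i+j}/2)\le 1/\sin(\theta_1/2)$ via Observation~\ref{obs:angleMono}.
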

Notice that when the angles $\theta_1,\theta_2,\theta_3$ are all equal to $\frac{\pi}{3}$, then $C(\theta_1,\theta_2)$ in Proposition \ref{prop:LBRR} for routing in a negative cone reaches a maximum of $5/\sqrt{3}$ when $\alpha=\frac{\pi}{6}$, matching the bound from \cite{DBLP:journals/siamcomp/BoseFRV15}. Furthermore, the expression for routing in a positive cone matches the spanning ratio.

\begin{proof}
We will bound the path chosen by Algorithm \ref{alg:LocalTriRoute} by defining a potential for each vertex along a path and showing that at each step, the potential drops by at least the length of the chosen edge. We define the potential as follows, depicted in Figure \ref{fig:potential}. 
\begin{itemize}
    \item Case \ref{case:a}: $\Phi(p,t):=\max\limits_{j=\pm1}(|p\tau^{p,t}_{i+j}|+|\tau^{p,t}_{i+j}t|)$.

    \item Case \ref{case:b}: $\Phi(p,t):=\min\limits_{j=\pm1}(|p\tau^{p,t}_{i+j}|+|\tau^{p,t}_{i+j}t|)$.

    \item Case \ref{case:c}:  $\Phi(p,t):=|p\tau^{p,t}_{i+j}|+|\tau^{p,t}_{i+j}t|$ where the empty region ($X_L$ or $X_R$) is $C_{p,i+j}\cap T^{p,t}$.

    \item Case \ref{case:d}:  $\Phi(p,t):=\min\limits_{j=\pm1}(|p\tau^{p,t}_{i+j}|+|\tau^{p,t}_{i+j}\tau^{p,t}_{i-j}|+|\tau^{p,t}_{i-j}t|)$.
\end{itemize}

\begin{figure}[hbt!]
    \centering
    \includegraphics[page=7, scale=0.7]{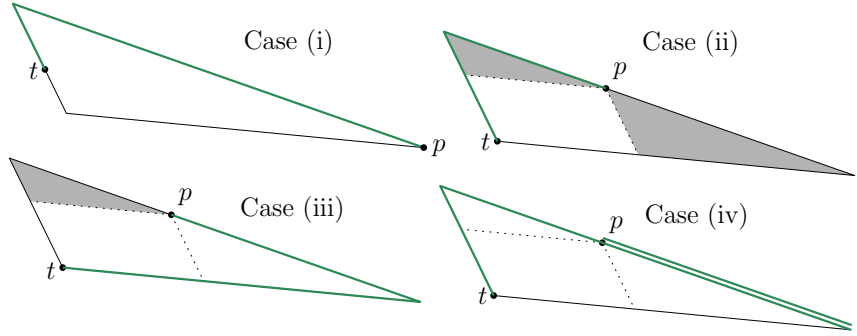}
    \caption{The potential is given by the green path. In this example, $i$ takes values $1,3,3,3$ for cases \ref{case:a}, \ref{case:b}, \ref{case:c}, \ref{case:d} respectively. The grey regions are empty.}\label{fig:potential}
\end{figure}

Now, we will show that in each case of Algorithm \ref{alg:LocalTriRoute}, the length of each chosen edge is less than the drop in potential. More precisely, we want to show $|pv| + \Phi(v,t)\leq \Phi(p,t)$ for cases \ref{case:a},\ref{case:b},\ref{case:c}, and \ref{case:d}.

Suppose the current vertex is $p$ and the case is \ref{case:a}, as can be seen in Figure \ref{fig:CaseA}. Then after an edge $pv$ is chosen, the current vertex will proceed to $v$ and the case will be either \ref{case:a}, \ref{case:b}, or \ref{case:c}. Case \ref{case:d} is not possible when $t$ lies in a negative cone of $v$ because at least one of the regions of $v$ is empty. Then the next potential, $\Phi(v, t)$, passes through some vertex $\tau^{v,t}_{i+k}$ for $k=\pm1$. We have
\begin{align*}
    |pv| + \Phi(v,t) &\leq (|p\tau^{p,v}_{i+k}|+ |\tau^{p,v}_{i+k}v|)+ (|v\tau^{v,t}_{i+k}|+|\tau^{v,t}_{i+k}t|) \\
    &= (|p\tau^{p,v}_{i+k}|+|v\tau^{v,t}_{i+k}|) + ( |\tau^{p,v}_{i+k}v|+|\tau^{v,t}_{i+k}t|) \\
    &= |p\tau^{p,t}_{i+k}|+|\tau^{p,t}_{i+k}t|  \\
    &\leq \max\limits_{j=\pm1}(|p\tau^{p,t}_{i+j}|+|\tau^{p,t}_{i+j}t|) =\Phi(p,t) 
\end{align*}
\begin{figure}[hbt!]
    \centering
    \includegraphics[page=15, scale=0.7]{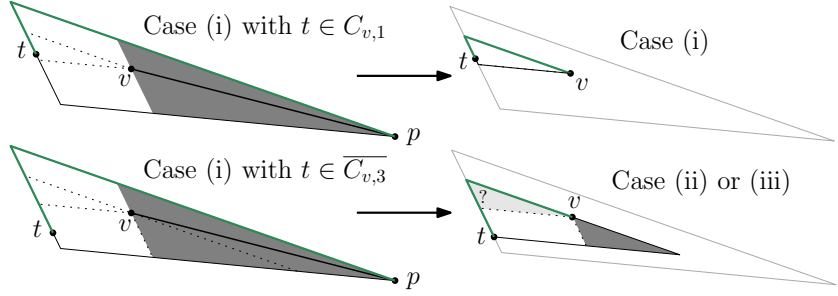}
    \caption{Bounding the potential in case \ref{case:a} since $t$ lies in $C_{p,1}$. The dark grey regions are empty. }\label{fig:CaseA}
\end{figure}

Next, suppose the current vertex is $p$ and the case is \ref{case:b}, depicted in Figure \ref{fig:CaseB}. Let $j$ minimize the expression from $\Phi(p,t)$. Notice that since we choose the edge closest in cyclic order about $p$ to the region $C_{p,i+j}\cap T^{p,t}$, then we can deduce that $v$ has no neighbours in its region $C_{v,i+j}\cap T^{v,t}$. Therefore once the current vertex proceeds to $v$, then the possible cases are only \ref{case:b} or \ref{case:c}. Then we have
\begin{align*}
    |pv| + \Phi(v, t) &\leq (|p\tau^{p,v}_{i+j}|+ |\tau^{p,v}_{i+j}v|)+ (|v\tau^{v,t}_{i+j}|+|\tau^{v,t}_{i+j}t|) \\
    &=(|p\tau^{p,v}_{i+j}|+ |v\tau^{v,t}_{i+j}|)+(|\tau^{p,v}_{i+j}v|+|\tau^{v,t}_{i+j}t|) \\
    &=|p\tau^{p,t}_{i+j}|+|\tau^{p,t}_{i+j}t| =\Phi(p , t) 
\end{align*}
\begin{figure}[hbt!]
    \centering
    \includegraphics[page=14, scale=0.7]{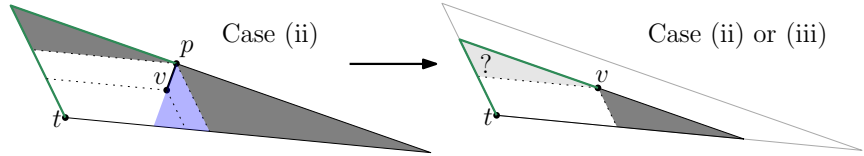}
    \caption{Case \ref{case:b} when $t$ lies in $\overline{C_{p,3}}$ and $j=-1$. The dark grey and blue regions are empty.}\label{fig:CaseB}
\end{figure}

Now suppose $p$ is the current vertex, the case is \ref{case:c}, and $C_{p,i+j}\cap T^{p,t}$ is the empty region, shown in Figure \ref{fig:CaseC}. If $X_M$ is not empty, then the choice of closest neighbour $v$ to $C_{p,i+j}\cap T^{p,t}$ guarantees that the corresponding region $C_{v,i+j}\cap T^{v,t}$ of $v$ is also empty. Likewise, if $X_M$ is empty, then choosing the unique neighbour in $C_{p,i-j}\cap T^{p,t}$ again guarantees that $v$ has an empty region $C_{v,i+j}\cap T^{v,t}$. Either way, once the current vertex continues to $v$, then the case must be either \ref{case:b} or \ref{case:c}. Then the exact same sequence of inequalities as from case \ref{case:b} completes the argument. 

\begin{figure}[hbt!]
    \centering
    \includegraphics[page=13, scale=0.7]{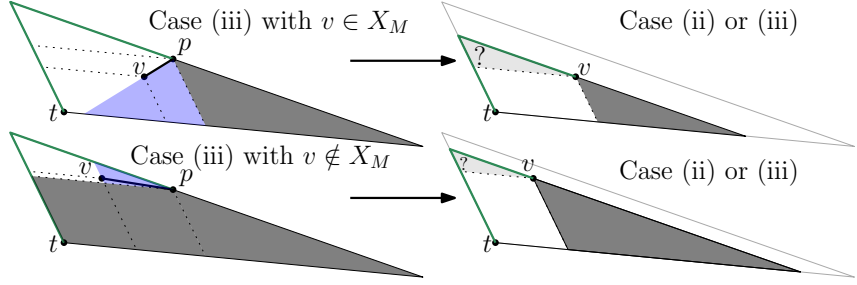}
    \caption{Case \ref{case:c} when $t$ lies in $\overline{C_{p,3}}$ and $j=-1$. The dark grey and blue regions are empty.}\label{fig:CaseC}
\end{figure}

Finally, we move on to case \ref{case:d}, where $p$ is the current vertex. After choosing the next vertex $v$, the next possible cases are \ref{case:b}, \ref{case:c}, or \ref{case:d}. Let $j=\pm1$ minimize the expression of $\Phi(p, t)$. In any case, we have $\Phi(v,t)\leq (|v\tau^{v,t}_{i+j}|+|\tau^{v,t}_{i+j}\tau^{v,t}_{i-j}|+|\tau^{v,t}_{i-j}t|)$ by the triangle inequality. When $v\in X_M$, then we use the following inequalities to prove the claim, also shown in Figure \ref{fig:CaseDIneq}.

\begin{enumerate}
    \item $|pv|\leq |p\tau^{v,p}_{i-j}|+|\tau^{v,p}_{i-j}v|$ by triangle inequality
    \item $|\tau^{v,t}_{i-j}t|+|\tau^{v,p}_{i-j}v| = |\tau^{p,t}_{i-j}t|$ by projection
    \item $|p\tau^{v,p}_{i-j}|\leq |\tau^{v,p}_{i+j}\tau^{v,p}_{i-j}|$ since $p$ lies on $\tau^{v,p}_{i+j}\tau^{v,p}_{i-j}$
    \item $|\tau^{v,t}_{i+j} \tau^{v,t}_{i-j}|+|\tau^{v,p}_{i+j}\tau^{v,p}_{i-j}| = |\tau^{p,t}_{i+j} \tau^{p,t}_{i-j}|$ by translation and projection
    \item $|v\tau^{v,t}_{i+j}| \leq |p\tau^{t,p}_{i+j}|$ by projection
\end{enumerate}
    
When $v$ is not in $X_M$, then let $u$ be the intersection of $p\tau^{p,v}_i$ and $\tau^{t,v}_{i-j}\tau^{t,v}_{i+j}$. The following inequalities suffice to prove the claim.
\begin{enumerate}
    \item $|pv|\leq |pu|+|uv|$ by triangle inequality
    \item $|uv|+|v\tau^{v,t}_{i+j} | \leq |p\tau^{p,t}_{i+j}|$ by projection
    \item $|\tau^{v,t}_{i-j}t|+|pu| = |\tau^{p,t}_{i-j}t|$ by projection
    \item $|\tau^{v,t}_{i+j} \tau^{v,t}_{i-j}| \leq |\tau^{p,t}_{i+j} \tau^{p,t}_{i-j}|$ by projection
\end{enumerate}

\begin{figure}[hbt!]
    \centering
    \includegraphics[page=12, scale=0.7]{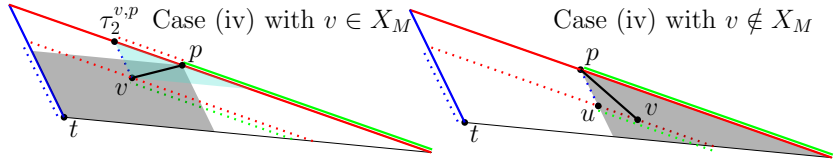}
    \caption{Bounding the potential in case \ref{case:d} when $t$ lies in $\overline{C_{p,3}}$, with $i=3, j=1$. The dotted paths representing $|pv|+\Phi(v,t)$ are shorter than the corresponding solid paths of $\Phi(p,t)$. The grey region contains $v$, and the blue triangle is $T^{v,p}$.}\label{fig:CaseDIneq}
\end{figure}

Since $\Phi(t,t)=0$, then the path from $s$ to $t$ output by Algorithm \ref{alg:LocalTriRoute} can have length at most $\Phi(s,t)$. When $t$ is in a positive cone of $s$, then the potential is defined using case \ref{case:a}. The corresponding path $p\tau^{p,t}_{i+j}+\tau^{p,t}_{i+j}t$ is $\pi-\theta_{i+j}$ monotone, then the routing ratio in such a case can be at most $\frac{1}{\sin(\theta_1/2)}$ by Observation \ref{obs:angleMono}. 

On the other hand, when $t$ is in a negative cone of $s$, there are three possible cases: \ref{case:b}, \ref{case:c} or \ref{case:d}. The triangle inequality tells us that $\Phi(s,t)$ is largest in case \ref{case:d}. Then, similar to the proof of Proposition \ref{prop:LBRR}, the routing ratio is bounded by $C(\theta_1,\theta_2)$ using the law of sines.
\end{proof}
Finally, Theorem \ref{thm:TriRR} is a consequence of Propositions \ref{prop:LBRR} and \ref{prop:RRUB} since Algorithm \ref{alg:LocalTriRoute} is $1$-local.

\subsection{Comparison to known routing algorithms}\label{sec:compare}

In this subsection, we show that currently known local routing algorithms when applied on the $\text{TD}_{\theta_1,\theta_2}$-Delaunay graph are suboptimal. Firstly, note that by using a stretch factor upper bound from Section \ref{sec:span}, we can apply the technique of Bose and Morin \cite{DBLP:journals/tcs/BoseM04} to obtain a local routing algorithm that finds a path between any two vertices with length at most $9$ times the stretch factor, which is not optimal. Another approach is to route in $\text{TD}_{\theta_1,\theta_2}$-Delaunay graphs by combining the algorithm of Bose et al. \cite{DBLP:journals/siamcomp/BoseFRV15} with an affine transformation. 
When $\triangle$ is the equilateral triangle, then Algorithm \ref{alg:LocalTriRoute} simplifies to the standard TD-Delaunay routing algorithm from \cite{DBLP:journals/siamcomp/BoseFRV15}. In this case, notice that the thresholds in cases \ref{case:b} and \ref{case:d} simplify so that $j$ refers to the corner $\tau^{p,t}_{i+j}$ nearest $p$. In other words, the decision threshold is the midpoint of the segment $\tau^{p,t}_{i+1}\tau^{p,t}_{i-1}$. We will analyze this standard TD-Delaunay routing algorithm when it is used on the affine transformation of a general $\text{TD}_{\theta_1,\theta_2}$-Delaunay graph. Since affine transformations preserve midpoints, then the decision threshold in case \ref{case:d} is also the midpoint of the segment $\tau^{p,t}_{i+1}\tau^{p,t}_{i-1}$. It is in this way that applying an affine transformation to the existing algorithm differs from our Algorithm \ref{alg:LocalTriRoute}. To see the difference in routing ratio of these two approaches, consider the construction of $G_1$ from Proposition \ref{prop:LBRR}. If we enforce $|s\tau_2|<|s\tau_1|$, then the path output by the affine transformation of the standard TD-Delaunay routing algorithm would choose to visit $p_1$ first. The routing ratio of this algorithm would therefore be at least 
\begin{align*}
    \frac{\sin(\theta_3-\alpha)}{\sin(\theta_1)}+\frac{\sin(\alpha)}{\sin(\theta_2)}+\frac{\sin(\alpha)}{\sin(\theta_2)}+\frac{\sin(\alpha+\theta_2)}{\sin(\theta_1)}-\epsilon
\end{align*}
where $\alpha:=\angle\tau_{2} \tau_{3} s$. For example, when $\theta_1=\frac{\pi}{6}$, $\theta_2=\frac{\pi}{5}$, and $\alpha=\frac{\pi}{3}$ then the routing ratio of the standard TD-Delaunay algorithm under an affine transformation is strictly more than $6.55$, whereas the optimal routing ratio is less than $6.52$ by Proposition \ref{prop:RRUB}.

\small
\bibliographystyle{abbrv}

\newpage
\end{document}